\documentclass[journal]{IEEEtran}

\usepackage{amsmath,amssymb,mathtools,bm}
\usepackage{graphicx}
\usepackage{booktabs,tabularx,array,multirow}
\usepackage{cite}
\usepackage{url}
\usepackage{xcolor}
\usepackage{tikz}
\usetikzlibrary{arrows.meta,positioning,shapes.geometric,fit,calc}
\usepackage{algorithm}
\usepackage{algorithmic}
\usepackage{comment}
\usetikzlibrary{arrows.meta,fit,positioning,backgrounds}
\usepackage{amsthm} 

\theoremstyle{boldtheorem}
\newtheorem{definition}{Definition}
\newtheorem{theorem}{Theorem}
\newtheorem{proposition}{Proposition}
\newtheorem{corollary}{Corollary}

\newcommand{\C}{\mathbb{C}}

\newcommand{\rank}{\operatorname{rank}}

\begin{document}

\title{Identifiability Limits of Forced Oscillation Sources in Power Systems}

\author{Kai~Sun,~\IEEEmembership{Fellow,~IEEE}%
\thanks{This work was supported by NSF grant ECCS-2329924.}%
\thanks{K. Sun is with the Min H. Kao Department of Electrical Engineering and Computer Science, The University of Tennessee, Knoxville, TN 37996 USA (e-mail: kaisun@utk.edu).}%
}

\maketitle

\begin{abstract}
Whether a forced-oscillation source can be uniquely localized depends
jointly on the available measurements and the candidate intervention
dictionary. This paper considers a single unknown constant-amplitude
sinusoid acting through one of physical intervention
channels. Because its amplitude and phase are unknown, each candidate
harmonic response is observable only up to a nonzero complex scalar and
therefore defines a projective ray in measurement space. A
necessary-and-sufficient condition is derived for exact localization,
together with a weighted nearest-ray locator and a deterministic recovery
bound based on projective separation. The paper proposes a framework distinguishing four
mechanisms of source localization failure: mechanism mismatch, feature-projection loss,
structural nonidentifiability, and poor conditioning or model error.
Numerical studies on the Kundur two-area system illustrate well-separated recovery,
mechanism-dependent terminal decisions, measurement-induced ambiguity,
and the sensitivity of nearly collinear signatures to model
perturbations. Then, IEEE--NASPI Contest models are used to further examine broad candidate
dictionaries, terminal-feature source coverage, and ambiguity removal
through measurement augmentation. The results show that correctly locating the source bus does not by itself establish unique identification of the internal forcing channel or robustness to measurement and model uncertainty.
\end{abstract}

\begin{IEEEkeywords}
Descriptor systems, dissipating energy flow, forced oscillation, harmonic response, identifiability, oscillation source localization.
\end{IEEEkeywords}

\section{Introduction}
\IEEEPARstart{S}{ustained} oscillations can reduce transfer capability, complicate control-room diagnosis, and expose equipment to repeated stress. Their mitigation normally requires identifying not only the affected region but also the responsible unit and, when possible, the internal control or physical mechanism. Accordingly, oscillation source localization is naturally hierarchical: region, unit, and inside-unit mechanism. Existing approaches include traveling-wave, damping-torque, mode-shape, energy-flow, equivalent-circuit, model-based observer, and data-driven methods \cite{WangSun2017Survey,Luan2019UIO,Delabays2023PRX}.

The dissipating energy flow (DEF) method and its variants are attractive because they can use phasor measurement unit (PMU) data to localize the oscillation source at the terminal bus level, and have been successful in many power plant-dominated events \cite{Chen2013DEF,Maslennikov2017DEF,Maslennikov2024Efficiency}. Their interpretation and performance, however, depend on the choice of measured variables at terminal buses, as well as on network dissipation and the physical mechanism producing the oscillation \cite{Chevalier2018Impedance,Chevalier2020Passivity,Zhi2021EnergyFlow}. For instance, the DEF method may fail in localizing voltage--reactive power oscillations \cite{Wang2026PassivityIBR}. These findings motivate mechanism-aware improvements for oscillation source localization, but they also expose a deeper question: before selecting a physical index, \emph{is the proposed source location uniquely recoverable from the available measurements at all?}

This paper does not propose a new method for localizing the true oscillation source. Instead, it addresses the prior information-level question of whether the available measurements are sufficient to distinguish among a declared set of candidate source locations. Mechanism-based indices such as DEF and its variants may exploit useful physical structure, but they cannot overcome ambiguity already present in the raw measurements and may themselves discard distinguishing information through feature compression. Accordingly, source location is treated here as an \emph{identifiability} problem.  A single sinusoid is assumed to enter one unknown member of a finite candidate dictionary. The candidate can be an external reference, a controller summing junction, a differential or algebraic equation, a measurement or actuator channel, or a model parameter. The central observation is that the unknown amplitude and phase multiply each candidate's measured harmonic response by an arbitrary nonzero complex scalar. Consequently, a candidate is represented not by one vector but by a one-dimensional complex subspace, or projective ray. \emph{Exact localization is possible if and only if the candidate rays are distinct.} The contributions are as follows.

\begin{enumerate}
\item Source location is defined by a declared physical intervention rather than an arbitrary state coordinate, with a descriptor-system model covering interventions in dynamic and algebraic equations, signal and measurement channels, and first-order parameter modulation.

\item Projective input-distinguishability is specialized to single-frequency forced-oscillation localization with unknown amplitude and phase, yielding necessary and sufficient conditions for exact localization and an impossibility condition for observationally equivalent harmonic signatures.

\item A weighted projective harmonic-signature matching rule, bounded-error recovery condition, and sensor-selection principle that maximizes worst-case projective separation are developed.

\item A method-independent four-step framework assesses raw identifiability, robust separation, feature preservation, and mechanism-consistent interpretation. It classifies source-location failures as structural nonidentifiability, poor conditioning/model error, feature-projection loss, or mechanism mismatch, with DEF and related mechanism-based methods as illustrations.

\item Studies on the Kundur and IEEE--NASPI Contest-derived systems demonstrate well-conditioned recovery, measurement-induced ambiguity and its removal, mechanism- and feature-dependent terminal decisions, and sensitivity to finite model and measurement errors.

\end{enumerate}

In the remainder of the paper, Section~\ref{sec:formulation} defines the source and model formulation; Sections~\ref{sec:identifiability}--\ref{sec:locator} develop the identifiability theory, projective matching rule, and robustness analysis; Section~\ref{sec:failure-taxonomy} develops the four-step diagnostic framework and associated failure classes; Section~\ref{sec:numerical-studies} presents validation studies on the Kundur and IEEE--NASPI Contest systems; and Section~\ref{sec:discussion} concludes the paper.

\section{Source Definition and Problem Formulation}
\label{sec:formulation}
\subsection{Physical Intervention Sites}
In this paper, source location is defined relative to a specific physical intervention in a fixed component realization. An oscillatory input may enter through, e.g., a governor or exciter reference, measurement channel, algebraic constraint, or controller state, generally producing different responses. Each candidate source is therefore tied to a declared signal path, summing junction, equation residual, or physical parameter.

\begin{definition}[Candidate intervention]
Let $\mathcal K=\{1,\ldots,N_c\}$ denote a finite candidate dictionary.
Candidate intervention $k\in\mathcal K$ perturbs a declared physical quantity as
\begin{equation}
  q_k(t)\leftarrow q_k(t)+s(t),
  \label{eq:intervention}
\end{equation}
where
\begin{equation}
  s(t)=a\cos(\omega t+\phi), \qquad a>0.
  \label{eq:sinusoid}
\end{equation}
All other model elements remain unchanged. The \emph{source location} is the active intervention index $k$ relative to the declared dictionary and model realization. \end{definition}

Fig.~1 illustrates this hierarchy and distinguishes terminal localization from internal intervention identification. External terminal bus/branch measurements may identify a unit while leaving multiple internal candidate channels observationally equivalent.

\begin{figure}[!t]
\centering
\resizebox{0.8\columnwidth}{!}{%
\begin{tikzpicture}
[
    box/.style={
        draw,
        rounded corners,
        minimum height=6mm,
        minimum width=11mm,
        align=center,
        font=\scriptsize,
        inner sep=1.8pt
    },
    arrow/.style={-{Latex[length=1.5mm]},thick},
    inject/.style={-{Latex[length=1.5mm]},dashed,thick}
]


\node[
    draw,
    rounded corners,
    minimum width=25mm,
    minimum height=31mm,
    fill=black!5
] (regionbox) at (1.25,0) {};

\node[
    anchor=north west,
    font=\scriptsize\bfseries
] at ([xshift=1.2mm,yshift=-1.0mm]regionbox.north west)
{Region};

\node[
    draw,
    rounded corners,
    minimum width=19mm,
    minimum height=21mm,
    fill=black!0
] (unitbox) at (1.25,-0.2) {};

\node[
    anchor=north west,
    font=\scriptsize\bfseries
] at ([xshift=1.0mm,yshift=-1.0mm]unitbox.north west)
{Unit};

\node[
    draw,
    rounded corners,
    minimum width=15mm,
    minimum height=8mm,
    fill=black!15,
    align=center,
    font=\scriptsize
] (site) at (1.25,-0.5)
{Physical\\intervention\\site};

\node[
    font=\scriptsize
] (s0) at (1.25,0.45)
{$s(t)$};

\draw[arrow] (s0) -- (site.north);


\node[box,fill=black!15] (sum1) at (3.5,-1)
{Reference\\junction};

\node[box] (k1) at (4.80,-1)
{$K_1(s)$};

\node[box,fill=black!15] (sum2) at (6.1,-1)
{Internal\\junction};

\node[box] (k2) at (7.4,-1)
{$K_2(s)$};

\draw[arrow] (sum1) -- (k1);
\draw[arrow] (k1) -- (sum2);
\draw[arrow] (sum2) -- (k2);

\node[
    box,
    fill=black!15,
    minimum width=15mm
] (plant) at (7.4,0.20)
{Device and\\network port};

\node[
    box,
    minimum width=15mm
] (meas) at (7.4,1.05)
{Measured\\$y(t)$};

\draw[arrow] (k2.north) -- (plant.south);
\draw[arrow] (plant.north) -- (meas.south);


\node[
    font=\scriptsize
] (s1) at (3.5,-.2)
{$s(t)$ at $k_1$};

\node[
    font=\scriptsize
] (s2) at (6.1,-.2)
{$s(t)$ at $k_2$};

\node[
    font=\scriptsize
] (sp) at (5,0.2)
{$p^\star+s(t)$};

\draw[inject] (s1) -- (sum1.north);
\draw[inject] (s2) -- (sum2.north);
\draw[inject] (sp.east) -- (plant.west);


\node[
    draw,
    rounded corners,
    fit=(sum1)(k1)(sum2)(k2)(plant)(meas)(s1)(s2)(sp),
    inner xsep=1.2mm,
    inner ysep=3mm
] (detailbox) {};

\node[
    anchor=north west,
    font=\scriptsize\bfseries
] at ([xshift=1mm,yshift=-1.0mm]detailbox.north west)
{Expanded realization of the Unit};


\draw[dashed,thin]
    (unitbox.north east) --
    (detailbox.north west);

\draw[dashed,thin]
    (unitbox.south east) --
    (detailbox.south west);

\end{tikzpicture}
}
\vspace{-0.1in}
\vspace{-0.3in}
\caption{Hierarchical localization and physical intervention sites in a fixed component realization. A unit label need not distinguish an upstream reference intervention from an internal intervention.}
\label{fig:hierarchy}
\end{figure}

\subsection{Descriptor-System Representation for Power Systems}

Consider a nonlinear power-system DAE model
\begin{align}
    \dot{x}_d &= f(x_d,v,p,u),\\
    0 &= g(x_d,v,p,u),\\
    y &= h(x_d,v,p,u),
\end{align}
where $x_d$ contains the differential states, $v$ contains the
algebraic variables, $p$ denotes model parameters, $u$ collects
external and internal signal channels, and $y$ contains the available
measurements. Define the augmented descriptor
variable $x=[x_d^\top~ v^\top]^\top$. Linearization about an operating point ``$\star$" shifted to the origin gives, for candidate
intervention $k$,
\begin{align}
    E\dot{x}(t) &= Ax(t)+b_k s(t), \label{eq:descriptor-state}\\
    y(t) &= Cx(t)+d_k s(t), \label{eq:descriptor-output}
\end{align}
where the same symbols denote the corresponding deviations, and $b_k$ and $d_k$ define the linearized channel for candidate $k$. The direct term $d_k$ captures forcing or corruption acting directly on a measured channel.
For the semi-explicit DAE, singular $E$ retains the network algebraic variables in $v$, while $A$ and $C$ follow from the corresponding Jacobians:
\begin{equation}
    E=
    \begin{bmatrix}
        I & 0\\
        0 & 0
    \end{bmatrix},     ~A=
    \left.
    \begin{bmatrix}
        f_{x_d} & f_v\\
        g_{x_d} & g_v
    \end{bmatrix}
    \right|_\star,
    ~C=
    \left.
    \begin{bmatrix}
        h_{x_d} & h_v
    \end{bmatrix}
    \right|_\star.
\end{equation}
A time-varying physical parameter is included in the same framework.
If candidate $k$ is defined by
\begin{equation}
    p_k(t)=p_k^\star+s(t),
\end{equation}
then, to first order,
\begin{equation}
    b_k=
    \left.
    \begin{bmatrix}
        f_{p_k}\\
        g_{p_k}
    \end{bmatrix}
    \right|_\star,
    \qquad
    d_k=
    \left.h_{p_k}\right|_\star .
    \label{eq:param-channel}
\end{equation}
Thus, sinusoidal parameter modulation fits the same descriptor model in \eqref{eq:descriptor-state}--\eqref{eq:descriptor-output}, with nominal $A$ and $C$ unchanged. If $b_k=d_k=0$ at the operating point, the modulation is locally invisible at the fundamental frequency; identification then requires higher-order harmonics or sidebands, a different operating point, or measurements with a nonzero first-order signature, which are beyond the scope of this paper.

Table~I summarizes representative candidate interventions and their
corresponding linearized channels. The formulation does not assume
that all listed mechanisms are equally likely; it provides a common
mathematical representation for testing whether the available
measurements can distinguish them.

\begin{table}[t]
\caption{Representative Candidate Interventions and Linearized Channels}
\label{tab:dictionary}
\centering
\scriptsize
\setlength{\tabcolsep}{2pt}
\renewcommand{\arraystretch}{1.1}
\begin{tabular}{@{}p{0.17\columnwidth}p{0.48\columnwidth}p{0.3\columnwidth}@{}}
\toprule
Candidate & Example and linearized channel & Identifiability issue \\
\midrule

Controller/\newline reference 
& $P_{\rm ref}$, $Q_{\rm ref}$, $V_{\rm ref}$, frequency reference; known column or distribution vector at the declared summing junction
& Cascaded locations may differ only by a complex gain \\

Differential-equation residual
& Mechanical torque, excitation or controller state; inject $s(t)$ into the declared differential equation
& Location depends on the chosen realization \\

Algebraic-equation residual
& Current balance, voltage constraint, converter relation; corresponding algebraic-equation channel
& Elimination may merge distinct physical channels \\

Measurement/\newline actuator
& Sensor oscillation or modulated command; $b_k=0$, $d_k\neq0$ for direct measurement injection
& Direct and dynamic paths may be indistinguishable \\

Parameter modulation
& Gain, impedance, time constant; sensitivity channels $b_k$ and $d_k$ from \eqref{eq:param-channel}
& First-order signature may vanish at some operating points \\

\bottomrule
\end{tabular}
\end{table}

\subsection{Coordinate Dependence and Standing Assumptions}
\begin{proposition}[Coordinate dependence]
The statement that a source acts on one state coordinate is not invariant under a nonsingular state transformation. A source location is invariant only after it is tied to a physical intervention in a fixed realization.
\end{proposition}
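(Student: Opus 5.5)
The plan has two halves. First, a counterexample shows that ``acting on one state coordinate'' is not preserved by a change of coordinates. Second, an invariance argument shows that the input/output pair $(b_k,d_k)$ defined by a physical intervention gives a realization-independent response.

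For the first half, start from \eqref{eq:descriptor-state}--\eqref{eq:descriptor-output} with $b_k=e_i$, a standard basis vector. This is a source that, in the given coordinates, enters only the $i$th equation. Apply a nonsingular change of variables $x=T z$ together with a nonsingular left (equation) transformation $P$, giving
\begin{equation}
PET\dot z = PAT z + Pb_k s(t), \qquad y = CT z + d_k s(t).
\end{equation}
The new input vector is $Pe_i$, the $i$th column of $P$. For the explicit ODE case, $E=I$ and $P=T^{-1}$, so the new input vector is the $i$th column of $T^{-1}$. Choose $T$ so that this column has at least two nonzero entries. A concrete choice is $T^{-1}=I+e_je_i^\top$ with $j\neq i$, which gives $Pe_i=e_i+e_j$. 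In the $z$ coordinates the same source now appears in two equations. Conversely, a source that is spread over several coordinates can be made single-coordinate by choosing $P$ so that $Pb_k=e_1$, which is always possible for $b_k\neq 0$. Hence the property ``acts on exactly one state coordinate'' is not invariant, and neither is the identity of that coordinate. For the descriptor case, note that $E=\mathrm{diag}(I,0)$ is preserved by block-diagonal $(P,T)$. Choosing the mixing within the differential block, or within the algebraic block, then gives the same conclusion while keeping the semi-explicit structure.

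For the second half, show that once the intervention is tied to a declared physical quantity $q_k$, the quantities that carry physical meaning are transformation-invariant. These are the transfer function from $s$ to $y$ and the harmonic response vector
\begin{equation}
C(j\omega E-A)^{-1}b_k+d_k .
\end{equation}
Under $(P,T)$ the system matrices become $(PET,PAT,Pb_k,CT,d_k)$. The harmonic response is then
\begin{equation}
CT\,\bigl(P(j\omega E-A)T\bigr)^{-1}Pb_k+d_k=C(j\omega E-A)^{-1}b_k+d_k,
\end{equation}
assuming $j\omega$ is not a generalized eigenvalue, so the inverse exists. The physical intervention fixes the pair $(b_k,d_k)$ in whatever realization is used: it is transported consistently as $Pb_k$. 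The candidate's measured signature, and therefore its label in $\mathcal K$, is thus independent of coordinates. A coordinate label, by contrast, is not.

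The main obstacle is precision rather than difficulty. The statement is informal, so I must fix what ``invariant'' means: invariance of the candidate's input-output signature under equivalence transformations of a fixed realization. I must also be careful in the DAE case, where the admissible transformations must preserve the semi-explicit structure. I would state the result for the ODE case first, then remark on the block-diagonal descriptor extension.
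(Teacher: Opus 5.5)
Your proof is correct, and its first half is the paper's argument. The paper takes the differential subsystem $\dot x_d=f_{x_d}x_d+e_is$, changes variables by $z=Tx_d$, and notes that the new input $Te_i$ generally has several nonzero entries. It then simply asserts that the named physical intervention is unchanged. That brevity suffices there because the proposition's second sentence is essentially definitional.

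You go further in four ways:
\begin{enumerate}
\item you give an explicit spreading matrix, $T^{-1}=I+e_je_i^\top$;
\item you use separate equation and variable transformations $(P,T)$, so the argument covers the singular-$E$ descriptor model and not only the differential subsystem;
\item you note the converse, that any nonzero $b_k$ can be made single-coordinate;
\item most usefully, you check that $C(j\omega E-A)^{-1}b_k+d_k$ is unchanged when $b_k$ is transported as $Pb_k$.
\end{enumerate}
That last computation gives the second sentence real content and ties the proposition directly to the signatures $g_k$ used in Theorem~1.

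To make the ``only'' direction fully explicit, you could add one observation. Reading ``coordinate $i$'' in the new coordinates means the input $P^{-1}e_i$ in the old ones. Its signature $C(j\omega E-A)^{-1}P^{-1}e_i$ generally differs from that of $e_i$, so a coordinate label names different physical hypotheses in different coordinates.

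One precision point in your descriptor remark: block-diagonal $(P,T)=(\mathrm{diag}(P_1,P_2),\mathrm{diag}(T_1,T_2))$ preserves $E=\mathrm{diag}(I,0)$ only when $P_1T_1=I$. In general it yields $\mathrm{diag}(P_1T_1,0)$. Also, spreading the injection needs the block containing index $i$ to have dimension at least two.
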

\vspace{-0.1in}
\noindent\begin{proof}
Let $\dot x_d=f_{x_d}x_d+e_i s$, and define $z=Tx_d$ with nonsingular $T$. Then $\dot z=Tf_{x_d}T^{-1}z+Te_i s$. In general, $Te_i$ has multiple nonzero entries. Hence a one-coordinate injection in $x_d$ is not a one-coordinate injection in $z$, although the named physical intervention that generated the channel is unchanged.
\end{proof}

The principal result assumes: 1) one active candidate; 2) nonzero constant forcing amplitude; 3) a known or consistently estimated angular frequency $\omega$; 4) a known operating-point model and measurement set; 5) steady harmonic response after transients; and 6) a regular descriptor pencil with $j\omega E-A$ nonsingular. Measurement and model uncertainty are treated within this setting. The paper focuses on fundamental single-source identifiability from a linear-system perspective; finite-horizon transients, nonlinear or hybrid responses, frequency drift, and multiple simultaneous sources are outside its scope, though discussed where relevant.

\section{Fundamental Identifiability of a Harmonic Source}
\label{sec:identifiability}
\subsection{Projective Harmonic Signatures}
Write the forcing as $s(t)=\operatorname{Re}\{\alpha e^{j\omega t}\}$, where $\alpha=ae^{j\phi}\neq0$. The steady-state phasor measurement satisfies
\begin{equation}
  \widehat y=g_k(\omega)\alpha,
  \label{eq:signaturemodel}
\end{equation}
Assume that the descriptor pencil $\lambda E-A$ is regular and that
$j\omega E-A$ is nonsingular. These conditions ensure a unique harmonic
particular solution at the forcing frequency even when $E$ is singular
due to retained algebraic variables. Define
\begin{equation}
  g_k(\omega)=C(j\omega E-A)^{-1}b_k+d_k\in\C^m
  \label{eq:signature}
\end{equation}
as the \emph{candidate harmonic signature}. Because $\alpha$ is unknown, $g_k$ and $cg_k$ for any $c\in\C\setminus\{0\}$ represent the same observable source hypothesis.

\begin{definition}[Projective signature]
The amplitude- and phase-invariant signature of candidate $k$ is the complex ray
\begin{equation}
  [g_k]=\{cg_k:c\in\C\setminus\{0\}\},
  \label{eq:ray}
\end{equation}
which is a point in complex projective space $\mathbb{CP}^{m-1}$ when $g_k\neq0$, and, equivalently,  has real dimension $2(m-1)$.
\end{definition}

\begin{definition}[Observational equivalence]
Candidates $k$ and $\ell$ are equivalent at frequency $\omega$, denoted $k\sim_\omega\ell$, if $[g_k(\omega)]=[g_\ell(\omega)]$.
\end{definition}
This relation partitions the candidate dictionary into equivalence classes; the finest data-supported output is the class containing the active candidate. Fig.~\ref{fig:rays} contrasts distinguishable projective rays with an observationally equivalent pair.

\begin{figure}[!htbp]
\vspace{-0.1in}
\centering
\resizebox{0.8\columnwidth}{!}{%
\begin{tikzpicture}[scale=0.82, every node/.style={font=\scriptsize}, ray/.style={-{Latex[length=1.5mm]}, thick}]
\draw[->] (-0.2,0) -- (3.2,0) node[right] {$\operatorname{Re}$};
\draw[->] (0,-0.2) -- (0,2.7) node[above] {$\operatorname{Im}$};
\draw[ray] (0,0) -- (2.6,0.8) node[right] {$[g_k]$};
\draw[ray] (0,0) -- (1.2,2.2) node[above] {$[g_\ell]$};
\node at (1.65,-0.45) {distinct rays: identifiable};
\begin{scope}[xshift=4.6cm]
\draw[->] (-0.2,0) -- (3.2,0) node[right] {$\operatorname{Re}$};
\draw[->] (0,-0.2) -- (0,2.7) node[above] {$\operatorname{Im}$};
\draw[ray] (0,0) -- (2.7,1.35) node[right] {$[g_k]=[g_\ell]$};
\draw[ray, dashed] (0,0) -- (1.7,0.85);
\node at (1.65,-0.45) {same ray: indistinguishable};
\end{scope}
\end{tikzpicture}
}
\vspace{-0.1in}
\caption{Projective geometry induced by unknown forcing amplitude and phase. The drawing is schematic; each ray in the actual problem lies in $\C^m$.}
\vspace{-0.1in}
\label{fig:rays}
\end{figure}

\subsection{Necessary and Sufficient Condition}
\begin{theorem}[Exact single-frequency identifiability]
All candidates in $\mathcal K$ are uniquely identifiable from the noise-free steady-state phasor $\widehat y$ for every nonzero forcing amplitude and phase if and only if
\begin{align}
  g_k(\omega)&\neq0, &&\forall k\in\mathcal K, \label{eq:nonzero}\\
  \rank_{\C}\begin{bmatrix}g_k(\omega)&g_\ell(\omega)\end{bmatrix}&=2,
  &&\forall k\neq\ell. \label{eq:pairrank}
\end{align}
\end{theorem}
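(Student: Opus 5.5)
The plan is to make precise what ``uniquely identifiable'' means and then prove each direction separately. By \eqref{eq:signaturemodel}, the noise-free observation produced by candidate $k$ with complex amplitude $\alpha\neq0$ is $\widehat y=g_k(\omega)\alpha$. The set of all observations candidate $k$ can generate is therefore $O_k=\{g_k\alpha:\alpha\in\C\setminus\{0\}\}$. This equals the ray $[g_k]$ when $g_k\neq0$, and equals $\{0\}$ when $g_k=0$.

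We call all candidates uniquely identifiable for every nonzero amplitude and phase if the map $(k,\alpha)\mapsto k$ factors through $\widehat y$. Equivalently, $O_k\cap O_\ell=\emptyset$ for all $k\neq\ell$. Under this reading, the observation determines the active index regardless of $\alpha$. The whole argument then reduces to showing that
\[
O_k\cap O_\ell=\emptyset \quad\text{for all } k\neq\ell
\]
holds if and only if \eqref{eq:nonzero} and \eqref{eq:pairrank} hold.

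\emph{Sufficiency.} Assume \eqref{eq:nonzero} and \eqref{eq:pairrank}. Suppose $\widehat y=g_k\alpha=g_\ell\beta$ with $\alpha,\beta\neq0$ and $k\neq\ell$. Then $g_k\alpha-g_\ell\beta=0$ is a nontrivial linear dependence among the columns of $[g_k~g_\ell]$, so the rank is at most $1$. This contradicts \eqref{eq:pairrank}. Hence the observation sets are pairwise disjoint, and $k$ is recovered as the unique index with $\widehat y\in[g_k]$. Condition \eqref{eq:nonzero} also ensures that $\widehat y\neq0$, so each candidate produces a nonzero phasor lying on a genuine point of $\mathbb{CP}^{m-1}$.

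\emph{Necessity.} We argue by contrapositive. If $g_k=0$ for some $k$, we split into two cases. When $N_c\ge2$, the case where \eqref{eq:pairrank} itself is needed, $\rank[g_k~g_\ell]\le1$ for every $\ell$, so \eqref{eq:pairrank} fails. Moreover, candidate $k$ always yields $\widehat y=0$, which carries no information about $k$. In particular, if any other $g_\ell=0$, then $k$ and $\ell$ are indistinguishable. The single-candidate case is degenerate, and we will note it. Treated as a stated convention, \eqref{eq:nonzero} is required so that a source produces a nonzero, detectable signature.

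If instead all $g_k\neq0$ but $\rank[g_k~g_\ell]\le1$ for some $k\neq\ell$, then, since both columns are nonzero, $g_\ell=cg_k$ for some $c\neq0$. Thus $k\sim_\omega\ell$. For any $\alpha\neq0$, candidate $k$ with amplitude $\alpha$ and candidate $\ell$ with amplitude $\alpha/c$ produce the identical phasor $g_k\alpha$. This gives an explicit pair of admissible forcings that no locator can separate.

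The main obstacle is not algebraic but definitional. Disjointness of the ray sets requires only \eqref{eq:pairrank} when at most one $g_k$ vanishes. The strict role of \eqref{eq:nonzero} therefore depends on interpreting identifiability as also requiring that the source be detectable, i.e. that $\widehat y$ be nonzero. I would state this convention explicitly at the start, since without it the necessity of \eqref{eq:nonzero} is not forced.
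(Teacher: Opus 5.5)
Your proof is correct and follows the paper's argument. Sufficiency holds because distinct nonzero rays meet only at the origin. Necessity follows because candidate $k$ with coefficient $\alpha$ and candidate $\ell$ with coefficient $\alpha/c$ produce the same phasor, and a zero signature is dismissed as unobservable.

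The detectability convention you flag is exactly what the paper's final sentence (``If $g_k=0$, candidate $k$ is unobservable and cannot be localized'') adopts implicitly. So build $0\notin O_k$ into your definition from the start, rather than first reducing to pure disjointness. As you yourself note, under pure disjointness the ``only if'' direction fails when exactly one $g_k$ vanishes.
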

\noindent\begin{proof}
If candidate $k$ is active, then $\widehat y=\alpha g_k\in[g_k]$. Under (\ref{eq:nonzero})--(\ref{eq:pairrank}), the rays are nonzero and pairwise distinct, so exactly one candidate ray contains $\widehat y$. Conversely, if $g_\ell=cg_k$ for some $c\neq0$, candidate $k$ with coefficient $\alpha$ and candidate $\ell$ with coefficient $\alpha/c$ both produce $\widehat y=\alpha g_k$. The two source locations are indistinguishable from the phasor measurements after accounting for the unknown forcing amplitude and phase. No deterministic localization rule based solely on these measurements can correctly distinguish both source locations. If $g_k=0$, candidate $k$ is unobservable and cannot be localized.
\end{proof}

The theorem is both a recovery and no-go result. It does not say that arbitrary internal forcing is always locatable. It specifies exactly when a point location is logically supported.

\begin{proposition}[Descriptor rank test]
When $j\omega E-A$ is nonsingular, condition (\ref{eq:pairrank}) is equivalent to
\begin{equation}
\rank
\begin{bmatrix}
 j\omega E-A & -b_k & b_\ell\\
 C & d_k & -d_\ell
\end{bmatrix}=n+2
\label{eq:rosenbrock}
\end{equation}
for every $k\neq\ell$.
\end{proposition}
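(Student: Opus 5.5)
The natural approach is a Schur-complement (block elimination) argument on the Rosenbrock-type matrix in \eqref{eq:rosenbrock}. Write $M(j\omega)=j\omega E-A\in\C^{n\times n}$, which is nonsingular by assumption, and let
\begin{equation*}
R_{k\ell}=\begin{bmatrix} M & -b_k & b_\ell\\ C & d_k & -d_\ell\end{bmatrix}\in\C^{(n+m)\times(n+2)} .
\end{equation*}
The plan is to multiply $R_{k\ell}$ by nonsingular matrices on the left and right, which preserves rank. These multiplications should reduce $R_{k\ell}$ to a block-diagonal form whose lower-right block is, up to column sign changes, $\begin{bmatrix}g_k & g_\ell\end{bmatrix}$.

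The first step is a left multiplication by the block lower-triangular matrix $\begin{bmatrix} I & 0\\ -CM^{-1} & I\end{bmatrix}$, which is nonsingular. It turns the bottom block row into $\begin{bmatrix}0 & CM^{-1}b_k+d_k & -CM^{-1}b_\ell-d_\ell\end{bmatrix}$. By \eqref{eq:signature}, this is $\begin{bmatrix}0 & g_k(\omega) & -g_\ell(\omega)\end{bmatrix}$.

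The second step is a right multiplication by the block upper-triangular matrix $\begin{bmatrix} I & M^{-1}b_k & -M^{-1}b_\ell\\ 0 & I_2 & \end{bmatrix}$, understood as the $(n+2)\times(n+2)$ unipotent matrix. It clears the top-right entries $-b_k$ and $b_\ell$ against the block $M$, leaving
\begin{equation*}
\begin{bmatrix} M & 0 & 0\\ 0 & g_k & -g_\ell\end{bmatrix}.
\end{equation*}
One should check that this column operation does not disturb the already-reduced bottom row. It does not, because the bottom-left block is now zero.

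It then follows that $\rank R_{k\ell}=n+\rank\begin{bmatrix}g_k & -g_\ell\end{bmatrix}=n+\rank\begin{bmatrix}g_k & g_\ell\end{bmatrix}$, since $\rank M=n$ and negating a column preserves rank. Hence $\rank R_{k\ell}=n+2$ holds if and only if \eqref{eq:pairrank} holds for that pair. Quantifying over all $k\neq\ell$ gives the stated equivalence.

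There is no real obstacle here. The only points needing care are bookkeeping ones: the sign conventions in the columns $(-b_k,b_\ell)$ and $(d_k,-d_\ell)$ must produce $g_k$ and $-g_\ell$ consistently, and the equivalence relies on nonsingularity of $j\omega E-A$, not of $E$. This is why a singular descriptor $E$ causes no difficulty.

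It may also be worth remarking on the nonzero condition \eqref{eq:nonzero}. It is implied by the rank-$2$ test and so need not be checked separately.
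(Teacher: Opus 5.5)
Your block-elimination argument is correct, and it uses the same Schur-complement elimination of the $x$-block, relying only on nonsingularity of $j\omega E-A$, as the paper. The paper phrases it on null vectors $[x^\top,\beta,\gamma]^\top$, solving $x=(j\omega E-A)^{-1}(b_k\beta-b_\ell\gamma)$ to reduce to $\beta g_k-\gamma g_\ell=0$, whereas your two-sided unipotent factorization gives $\rank R_{k\ell}=n+\rank\begin{bmatrix}g_k & g_\ell\end{bmatrix}$ directly; your side remark that \eqref{eq:nonzero} follows from the pairwise test holds only when $N_c\ge 2$, since for a single candidate that test is vacuous.
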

\noindent\begin{proof}
Denote the matrix by \( M_{k\ell}\) Since $j\omega E-A$ is nonsingular, any nonzero null vector $[x^\top,\beta,\gamma]^\top$ of $M_{k\ell}$ must have $(\beta,\gamma)\neq(0,0)$.  The first block row gives \[ x=(j\omega E-A)^{-1}(b_k\beta-b_\ell\gamma). \] Substitution into the second block row yields \[ \beta\!\left[C(j\omega E-A)^{-1}b_k+d_k\right] -\gamma\!\left[C(j\omega E-A)^{-1}b_\ell+d_\ell\right]=0, \] or equivalently $\beta g_k-\gamma g_\ell=0$.  Thus, $\rank M_{k\ell}<n+2$ iff $[g_k\;g_\ell]$ has a nonzero null vector, i.e., iff $\rank_{\mathbb C}[g_k\;g_\ell]<2$. 
Therefore,
\(
\operatorname{rank} M_{k\ell}=n+2
\) iff
\(
\operatorname{rank}_{\mathbb C}[\,g_k\;g_\ell\,]=2,
\)
which proves \eqref{eq:rosenbrock}.\end{proof}

This Rosenbrock-type rank condition is closely related to input-distinguishability results in unknown-input observer and fault-isolation theory \cite{Kudva1980UIO,Hou1992UIO,Darouach1994UIO,Luan2019UIO}. The contribution is its specialization to forced-oscillation source localization: physical intervention hypotheses are defined in a power-system DAE model, unknown sinusoidal amplitude and phase are handled through projective equivalence, and region-, unit-, and internal-mechanism identifiability are distinguished. This projective formulation also serves as the information-level reference for evaluating subsequent feature maps and source-location criteria.

\begin{corollary}[One scalar phasor is insufficient]
If $m=1$, two or more candidates with nonzero scalar signatures cannot be uniquely localized when their forcing amplitudes and phases are arbitrary.
\label{scalar-phasor}
\end{corollary}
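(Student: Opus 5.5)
The plan is to reduce the corollary to the ray condition \eqref{eq:pairrank} of Theorem~1 and show it must fail when $m=1$. When $m=1$, each signature $g_k(\omega)$ is a single complex number. Take any two distinct candidates $k\neq\ell$ with $g_k\neq0$ and $g_\ell\neq0$. The matrix $\begin{bmatrix}g_k&g_\ell\end{bmatrix}$ is then $1\times2$, so its complex rank is at most $1$ and can never equal $2$. Condition \eqref{eq:pairrank} is therefore violated for every such pair, and Theorem~1 already gives non-identifiability.

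To make the failure concrete and match the converse part of the proof of Theorem~1, I will exhibit the observationally equivalent pair directly. Set $c=g_\ell/g_k\in\C\setminus\{0\}$; this is well defined because both scalars are nonzero. Then $g_\ell=cg_k$, so $[g_k]=[g_\ell]=\C\setminus\{0\}$, which is the single point of $\mathbb{CP}^0$. Consequently $k\sim_\omega\ell$ in the sense of Definition~3. For any phasor $\widehat y=\alpha g_k$ with $\alpha\neq0$, the alternative hypothesis (candidate $\ell$ with coefficient $\alpha/c$) produces the same measurement. Since the amplitude and phase are arbitrary, $\alpha/c$ is an admissible forcing coefficient. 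Hence no deterministic rule based on $\widehat y$ can separate $k$ from $\ell$.

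The only care needed is in the quantifiers. The corollary is a statement about \emph{two or more} candidates with nonzero signatures, so I will apply the argument to any pair drawn from them; if only one such candidate exists the statement is vacuous. Candidates with $g_k=0$ are already excluded by \eqref{eq:nonzero}.

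I expect no step to be a genuine obstacle. The one point worth stating explicitly is that $\mathbb{CP}^{0}$ is a single point. This is the geometric reason a scalar phasor, which carries a magnitude and a phase, is entirely absorbed by the unknown forcing coefficient $\alpha$ and leaves no information about location.
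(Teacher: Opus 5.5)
Your proof is correct and matches the paper's argument, which simply observes that every nonzero scalar $g_k\in\C$ spans all of $\C$, so all nonzero candidate rays coincide. This is your observation that $g_\ell=(g_\ell/g_k)g_k$ and $[g_k]=[g_\ell]=\C\setminus\{0\}$; your rank-$1$ remark and the explicit equivalent coefficient $\alpha/c$ only spell out the converse direction of Theorem~1.
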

\vspace{-0.1in}

\noindent\begin{proof}
Every nonzero scalar $g_k\in\C$ spans the same space $\C$, so all nonzero candidate rays coincide.
\end{proof}
This does not imply that one physical sensor is always useless, since
time-series data may contain transient or multi-frequency information.
It states only that one steady single-frequency complex scalar,
as obtained from a sustained constant-amplitude oscillation, contains
no spatial direction for distinguishing arbitrary unknown complex gains.

\vspace{-0.1in}
\subsection{Hierarchical Identifiability}
Let $\pi_r(k)$, $\pi_u(k)$, and $\pi_i(k)$ map an intervention to its region, unit, and internal mechanism labels. At hierarchy level $q\in\{r,u,i\}$, two labels $a$ and $b$ are distinguishable if no ray generated by a candidate with $\pi_q(k)=a$ equals a ray generated by a candidate with $\pi_q(\ell)=b$. Thus, region localization can be exact even if several units in that region are equivalent; likewise, a unit can be exact while its internal candidates remain indistinguishable from the specified steady-state harmonic measurements under unknown forcing amplitude and phase. This formalizes the practical distinction between finding an oscillating area, a contributing resource, and the root control variable.

\section{Projective-Signature Matching and Robustness}
\label{sec:locator}
\subsection{Candidate Residuals}
Let positive-definite matrix $W\succ0$ weight measurement confidence, scaling, and covariance, with $\|v\|_W^2=v^HWv$. 
For candidate $k$, define the weighted residual over the unknown
complex forcing coefficient $\alpha$ as
\begin{equation}
 J_k=\min_{\alpha\in\C}\|\widehat y-g_k\alpha\|_W^2 .
 \label{eq:residual}
\end{equation}
The minimizing weighted least-squares estimate of $\alpha$ is
\begin{equation}
 \widehat\alpha_k=
 \frac{g_k^H W\widehat y}{g_k^H Wg_k}.
 \label{eq:alphahat}
\end{equation}
Substituting \eqref{eq:alphahat} into \eqref{eq:residual} gives
\begin{equation}
 J_k=
 \widehat y^H W\widehat y-
 \frac{|g_k^H W\widehat y|^2}{g_k^H Wg_k}.
 \label{eq:residual-closed}
\end{equation}
The point estimate and a thresholded set-valued estimate are
\begin{align}
 \widehat k=\arg\min_{k\in\mathcal K}J_k, \qquad
 \widehat{\mathcal K}_\tau=\{k:J_k\leq J_{\min}+\tau_k\}. 
 \label{eq:point-set-estimate}
\end{align}
The localization algorithm is given below. In the exact, noise-free case, $J_k=0$ for every candidate in the true equivalence class and is positive for every candidate outside it.

\begin{algorithm}[t]
\footnotesize
\caption{Projective Harmonic-Signature Localization}
\label{alg:locator}
\begin{algorithmic}[1]
\STATE \textbf{Offline:} define the intervention dictionary $\mathcal K$.
\STATE Construct $(b_k,d_k)$ and compute $g_k(\omega)$ for all $k$.
\STATE Form equivalence classes and the separation matrix $[\delta_{k\ell}]$.
\STATE Flag zero, collinear, and nearly collinear signatures.
\STATE \textbf{Online:} estimate the dominant forcing frequency $\widehat\omega$
and extract the corresponding harmonic measurement phasor $\widehat y$.
\STATE Evaluate $\widehat\alpha_k$ and $J_k$ for each candidate.
\STATE Return $\widehat k$ only when the residual margin and model-validity tests pass; otherwise return $\widehat{\mathcal K}_\tau$.
\end{algorithmic}
\end{algorithm}

\vspace{-0.1in}
\subsection{Separation and Deterministic Robustness}
\label{sec:robust-separation}
Define the weighted projective coherence and separation
\begin{align}
 \mu_{k\ell}^{W}=\frac{|g_k^HWg_\ell|}
 {\sqrt{(g_k^HWg_k)(g_\ell^HWg_\ell)}}, ~
 \delta_{k\ell}^{W}=\sqrt{1-(\mu_{k\ell}^{W})^2}. \label{eq:coherence-separation}
\end{align}
Here $g_k\neq0,~g_l\neq0$, and $0\leq\mu_{k\ell}^{W}\leq1$. All geometric quantities are defined in the raw measurement space unless otherwise stated.
The pairwise separation $\delta_{k\ell}^{W}$ satisfies:
$\delta_{k\ell}^{W}=0$ indicates exact projective ambiguity,
$0<\delta_{k\ell}^{W}\ll1$ indicates near ambiguity or poor conditioning,
and $\delta_{k\ell}^{W}\approx1$ indicates strong pairwise separation.
A large value for one pair does not establish identifiability of the complete dictionary. Define its global minimum separation as
\begin{equation}
 \delta_{\min}^{W}(\omega)=\min_{k\neq\ell}\delta_{k\ell}^{W}(\omega).
 \label{eq:minseparation}
\end{equation}
Structural identifiability requires every $g_k(\omega)\neq0$ and $\delta_{\min}^{W}(\omega)>0$. A candidate with $g_k(\omega)=0$ is instead unobservable and structurally non-identifiable, and its pairwise $\delta$ is undefined. Terms such as small or well separated are relative to the measurement configuration, candidate dictionary, and the error margin in (\ref{eq:robustbound}). No universal numerical threshold on \(\delta_{k\ell}^W\) is implied; the level required for reliable discrimination depends on measurement and model uncertainty.

\begin{proposition}[Bounded output-error recovery]
Suppose candidate $k$ is active and
\begin{equation}
 \widehat y=\alpha g_k+n,\qquad \|n\|_W\leq\varepsilon.
 \label{eq:noisemodel}
\end{equation}
Nearest-ray classification returns $k$ uniquely if, for all $\ell\neq k$,
\begin{equation}
 d_{k\rightarrow\ell}^{W} \overset{\mathrm{def}}{=}|\alpha|\sqrt{g_k^HWg_k}\,\delta_{k\ell}^{W}>2\varepsilon.
 \label{eq:robustbound}
\end{equation}
\end{proposition}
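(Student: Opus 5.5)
Nearest-ray classification selects the candidate minimizing $J_\ell$ in \eqref{eq:residual-closed}. I will use the weighted inner-product geometry to write $\sqrt{J_\ell}$ as a distance from $\widehat y$ to a subspace, and then compare these distances with the triangle inequality. Let $S_\ell=\operatorname{span}_{\C}\{g_\ell\}$ and let $P_\ell$ be the $W$-orthogonal projector onto $S_\ell$, with $I-P_\ell$ its complement. Then
\[
\sqrt{J_\ell}=\operatorname{dist}_W(\widehat y,S_\ell)=\|(I-P_\ell)\widehat y\|_W .
\]
The map $v\mapsto\|(I-P_\ell)v\|_W$ is a seminorm that is $1$-Lipschitz in $\|\cdot\|_W$, since $I-P_\ell$ is an orthogonal projector in the $W$-inner product.

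\textbf{Upper bound for the true candidate.} Because $\alpha g_k\in S_k$, we get $\sqrt{J_k}\le\|\widehat y-\alpha g_k\|_W=\|n\|_W\le\varepsilon$.

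\textbf{Lower bound for a competitor $\ell\neq k$.} First compute the distance of the noise-free point $\alpha g_k$ from $S_\ell$. By Pythagoras in the $W$-inner product,
\[
\|(I-P_\ell)g_k\|_W^2=g_k^HWg_k-\frac{|g_\ell^HWg_k|^2}{g_\ell^HWg_\ell}=g_k^HWg_k\bigl(1-(\mu^W_{k\ell})^2\bigr),
\]
so $\|(I-P_\ell)\alpha g_k\|_W=|\alpha|\sqrt{g_k^HWg_k}\,\delta^W_{k\ell}=d^W_{k\rightarrow\ell}$. This is where the definition \eqref{eq:coherence-separation} enters, and it is the main computation to verify carefully. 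In particular, the coherence must be symmetric in $k,\ell$ so that the projection identity matches $\delta^W_{k\ell}$ exactly. It is, since $|g_\ell^HWg_k|=|g_k^HWg_\ell|$ because $W$ is Hermitian.

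Applying the reverse triangle inequality to the seminorm then gives
\[
\sqrt{J_\ell}\ge d^W_{k\rightarrow\ell}-\|(I-P_\ell)n\|_W\ge d^W_{k\rightarrow\ell}-\varepsilon .
\]

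\textbf{Conclusion.} Under \eqref{eq:robustbound}, for every $\ell\neq k$,
\[
\sqrt{J_\ell}>2\varepsilon-\varepsilon=\varepsilon\ge\sqrt{J_k}.
\]
Hence $J_k<J_\ell$ strictly for all $\ell\neq k$, and the $\arg\min$ in \eqref{eq:point-set-estimate} is unique and equals $k$.

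\textbf{Hypotheses and side remarks.} The condition $g_\ell\neq0$, needed for $\mu^W_{k\ell}$ to be defined, is implicit in the statement. Condition \eqref{eq:robustbound} forces $\delta^W_{k\ell}>0$, so it also implies the pairwise rank condition \eqref{eq:pairrank} of the exact-identifiability theorem for the active $k$. The bound is deterministic and asymmetric: it involves only the active candidate's energy $|\alpha|^2g_k^HWg_k$, which matches the directed notation $d^W_{k\rightarrow\ell}$.
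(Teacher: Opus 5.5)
Your proof is correct and follows the same route as the paper: the true ray lies within $\varepsilon$ of $\widehat y$, each competing ray lies at least $d^W_{k\rightarrow\ell}-\varepsilon$ away, and the condition $d^W_{k\rightarrow\ell}>2\varepsilon$ makes the true ray strictly closest. The only difference is presentational: you use $W$-orthogonal projectors instead of whitening with $L^HL=W$, and you verify explicitly, via Pythagoras, the distance identity $\|(I-P_\ell)\alpha g_k\|_W=d^W_{k\rightarrow\ell}$, which the paper states without computation.
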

\noindent\begin{proof}
After whitening with $L^HL=W$, the weighted norm becomes Euclidean.
The noiseless point $\alpha Lg_k$ lies on the true ray $[Lg_k]$, and
its distance to the competing ray $[Lg_\ell]$ equals $d_{k\rightarrow\ell}^{W}$. Since $\|Ln\|_2\le\varepsilon$, the
observed point is at most $\varepsilon$ from the true ray and at least
this noiseless separation minus $\varepsilon$ from the competing ray.
Hence, if the noiseless separation exceeds $2\varepsilon$, the true
ray is strictly closer than every competing ray.
\end{proof}
This bound applies to measurement error and aggregate response mismatch
relative to a fixed nominal candidate dictionary. Under stochastic noise
with unbounded support, no finite-data method can guarantee deterministic
100\% point accuracy; the relevant outputs are confidence levels, error
probabilities, or asymptotic consistency.

\vspace{-0.1in}
\subsection{Sensor Placement}
For a selected measurement set $\mathcal M$, let $g_k^{\mathcal M}$
denote the restricted signature. This suggests the following
source-localization-oriented sensor-selection criterion: maximize the
worst-case projective separation while retaining adequate sensitivity,
\begin{equation}
 \max_{\mathcal M}\;\min_{\substack{k\neq\ell\\\omega\in\Omega}}
 \delta_{k\ell}^{W}(\mathcal M,\omega)
 \quad\text{s.t.}\quad
 \|g_k^{\mathcal M}(\omega)\|_W\geq\gamma_{\min}.
 \label{eq:sensorplacement}
\end{equation}
It is a design principle, while optimization of an actual PMU placement is beyond the scope of this paper.
This differs from maximizing modal observability alone. For monitoring inter-area oscillations, PMUs have been conventionally placed at locations to monitor variables with strong modal observabilities, which enable large responses in PMU measurements. However, for localization of a forced oscillation source, a large response is not sufficient if every candidate produces the same normalized spatial signature. 

\section{A Diagnostic Framework for Source Localization}
\label{sec:failure-taxonomy}

This section first introduces a taxonomy of source-location failures and then organizes them into a four-step diagnostic framework. The goal is not to rank or replace DEF or other mechanism-based methods, but to assess whether the source is identifiable from raw measurements, robustly separated under uncertainty, preserved by feature extraction, and interpreted with a mechanism consistent with the oscillation.

\vspace{-0.1in}
\subsection{Measurement and Feature Hierarchy}
For a declared candidate intervention $k$, measurement configuration $\mathcal M$, and nuisance variables $\eta$---including source amplitude and phase, initial condition, and measurement error, which affects the measurements but are not of direct interest---a source-location procedure can be written as the composition
\begin{equation}
 (k,\eta)
 \xrightarrow{\;\mathcal F_{\mathcal M}\;}
 y_k
 \xrightarrow{\;\Phi\;}
 z_k
 \xrightarrow{\;\mathcal D\;}
 \widehat{\mathcal K},
 \label{eq:informationhierarchy}
\end{equation}
where $\mathcal F_{\mathcal M}$ is the physical input--output map, $y_k$ is the available raw trajectory or harmonic measurement, $\Phi$ is a deterministic feature map, $z_k$ is the retained feature, and $\mathcal D$ is the final decision rule. At one forcing frequency, the raw harmonic layer is
\begin{equation}
 \widehat y=\alpha g_k(\omega)+n,
 \qquad
 z_k=\Phi(\widehat y),
 \label{eq:rawtofeature}
\end{equation}
where the complex scalar $\alpha$ contains the unknown forcing amplitude and phase. Theorem~1 characterizes identifiability at the $y$ layer. A feature-based method can attain that raw-data limit only if $\Phi$ preserves the distinctions needed by $\mathcal D$; it may nevertheless be advantageous because it incorporates physical structure, reduces dimension, regularizes noisy measurements, or enables a transparent terminal-level interpretation.

Oscillation source localization methods usually operate at the feature
layer in \eqref{eq:informationhierarchy}. For an oriented monitored
branch $e=(i,j)$, define the complex terminal integral
\begin{equation}
\mathcal W_e
=
\int \Delta I_{ij}^{*}\,d(\Delta V_i).
\label{eq:complex-energy}
\end{equation}
The conventional DEF and complex DEF (CDEF) implementations considered here can then be expressed using a unified branch-score definition
\begin{equation}
s_e(K)
=
\frac{\operatorname{Im}\{\mathcal W_e\}
+K\operatorname{Re}\{\mathcal W_e\}}
{\sqrt{1+K^2}}.
\label{eq:def-cdef-feature}
\end{equation}
For $K=0$, this gives the conventional DEF branch score
\cite{Chen2013DEF,Maslennikov2017DEF}. For nonzero $K$, \eqref{eq:def-cdef-feature}
gives the projected CDEF formulation. In the IEEE--NASPI studies,
$K=0.135$ is used following
\cite{Estevez2022CDEF,Maslennikov2024Efficiency} for the WECC test
system.
For either method, the branch scores are aggregated algebraically into
bus-level features as
\begin{equation}
z_b(K)
=
\sum_{e:\,\operatorname{from}(e)=b}s_e(K)
-
\sum_{e:\,\operatorname{to}(e)=b}s_e(K),
\label{eq:def-cdef-bus}
\end{equation}
with no branch-count normalization. Let
$\mathbf z(K)=[z_b(K)]_{b\in\mathcal B_{\rm obs}}$, where
$\mathcal B_{\rm obs}$ denotes the set of buses incident to at least one
monitored branch and therefore represented in the bus-score map. The
implemented source decision is
\begin{equation}
\mathcal D_K(\mathbf z(K))
=
\arg\max_{b\in\mathcal B_{\rm obs}} z_b(K).
\label{eq:def-cdef-decision}
\end{equation}
Thus, the bus with the largest signed score is selected; if all scores
are negative, this rule selects the least-negative one. Branch
orientation and score polarity are fixed throughout; no absolute-value
ranking or event-dependent sign reversal is used. Both DEF and CDEF are based on terminal voltage--current relations and therefore may fail to capture source information expressed primarily through a volt--var relation.

\vspace{-0.1in}
\subsection{Information Upper Bound for Feature-Based Localization}
For each candidate, let $\mathcal Y_k$ denote the set of raw measurements allowed by the nuisance variables, and define its feature-image set as
\begin{equation}
 \mathcal Z_k=\Phi(\mathcal Y_k)
 =\{\Phi(y):y\in\mathcal Y_k\}.
 \label{eq:featureimageset}
\end{equation}
The following proposition shows that a deterministic feature map cannot create distinguishability absent from the raw measurements; it can only preserve or reduce it.

\begin{proposition}[Information upper bound for deterministic features]
Let the same deterministic feature map $\Phi$ be applied under every source hypothesis. Then
\begin{equation}
 \mathcal Y_k\cap\mathcal Y_\ell\neq\varnothing
 \quad\Longrightarrow\quad
 \mathcal Z_k\cap\mathcal Z_\ell\neq\varnothing.
 \label{eq:featurebound}
\end{equation}
Consequently, a feature cannot make two raw-data hypotheses universally distinguishable when their admissible raw-measurement sets overlap. The converse need not hold: disjoint raw-measurement sets can have overlapping feature images because $\Phi$ may be many-to-one.
\end{proposition}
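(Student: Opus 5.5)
The argument is a direct image-of-intersection argument, using only that $\Phi$ is a single function applied to every hypothesis. Assume $\mathcal Y_k\cap\mathcal Y_\ell\neq\varnothing$ and pick a witness $y_0$ in the intersection. Because $\Phi$ is deterministic and hypothesis-independent, $\Phi(y_0)$ is one well-defined point. By \eqref{eq:featureimageset}, $y_0\in\mathcal Y_k$ gives $\Phi(y_0)\in\mathcal Z_k$, and $y_0\in\mathcal Y_\ell$ gives $\Phi(y_0)\in\mathcal Z_\ell$. Hence $\Phi(y_0)\in\mathcal Z_k\cap\mathcal Z_\ell$, which proves \eqref{eq:featurebound}. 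Equivalently, $\Phi(\mathcal Y_k\cap\mathcal Y_\ell)\subseteq\Phi(\mathcal Y_k)\cap\Phi(\mathcal Y_\ell)$, and the image of a nonempty set is nonempty.

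For the ``consequently'' clause, I would take any decision rule $\mathcal D$ acting on features. It then returns the same output $\mathcal D(\Phi(y_0))$ whether $k$ or $\ell$ is active. So $\mathcal D\circ\Phi$ cannot be correct for both hypotheses at that admissible realization. This mirrors the converse argument in the proof of Theorem~1. In the harmonic setting, $\mathcal Y_k=\{\alpha g_k\}$ (plus noise), so raw overlap is exactly $[g_k]=[g_\ell]$, and projective ambiguity propagates to every feature layer.

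For the converse failing, one counterexample suffices. Take two noise-free candidates with distinct rays, so $\mathcal Y_k\cap\mathcal Y_\ell=\{0\}$, which is empty once $\alpha\neq0$ is enforced. Then take a constant map, or a scalar projection such as $\Phi(y)=|c^Hy|$ with suitable $c$, whose images overlap. Corollary~\ref{scalar-phasor} already supplies such a case: projecting to any single complex coordinate with nonzero entries merges all rays.

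The only delicate point is the noise-free harmonic case, where the nuisance set must exclude $\alpha=0$. Otherwise $0$ lies in every $\mathcal Y_k$, and the statement becomes vacuous rather than false. I would state that the nuisance set includes the constraint $\alpha\neq0$, consistent with \eqref{eq:sinusoid}.
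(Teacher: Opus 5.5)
Your argument is correct and is essentially the paper's proof. A witness $y_0\in\mathcal Y_k\cap\mathcal Y_\ell$ gives the common point $\Phi(y_0)\in\mathcal Z_k\cap\mathcal Z_\ell$, and the converse fails because $\Phi$ may be many-to-one; your constant-map and single-coordinate examples simply make this concrete.

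One minor caution concerns the side remark ``(plus noise)'': raw overlap coincides with $[g_k]=[g_\ell]$ only for the noise-free sets in \eqref{eq:projectiveresponseset}. Once bounded noise is admitted and $|\alpha|$ is not bounded away from zero, every $\mathcal Y_k$ contains the origin.
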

\noindent\begin{proof}
If $y\in\mathcal Y_k\cap\mathcal Y_\ell$, then the same value $\Phi(y)$ belongs to both $\mathcal Z_k$ and $\mathcal Z_\ell$. Conversely, a many-to-one map can send distinct inputs to the same feature value.
\end{proof}

For the single-frequency model, define
\begin{equation}
 \mathcal Y_k^{\omega}
 =\{\alpha g_k(\omega):\alpha\in\C\setminus\{0\}\}.
 \label{eq:projectiveresponseset}
\end{equation}
If $g_\ell=cg_k$ for some $c\neq0$, then $\mathcal Y_k^{\omega}=\mathcal Y_\ell^{\omega}$, and therefore every deterministic feature has the same attainable feature set for the two candidates. This is the feature-level consequence of the projective nonidentifiability condition in Theorem~1.

For illustration, if two interventions produce identical measured voltage/current trajectories, they necessarily yield the same DEF/CDEF quantity in (\ref{eq:def-cdef-feature}). The converse need not hold: distinct wide-area voltage/current patterns may produce the same scalar DEF value, terminal ranking, or projected CDEF score. Such equality may reflect feature-projection loss rather than identical raw responses. A richer CDEF representation may preserve distinctions lost by scalar DEF, but remains subject to (\ref{eq:featurebound}) because it uses the same measurements.

\vspace{-0.15in}
\subsection{Internal Ambiguity and Its Correctable Counterpart}
The distinction between an uncorrectable raw-data ambiguity and a correctable feature failure can be seen from two generic branches.

\emph{1) Uncorrectable branch---structural internal ambiguity:}
Consider the scalar cascade
\begin{equation}
 u_1\;\xrightarrow{K_1(s)}\;u_2\;\xrightarrow{K_2(s)}\;y,
 \label{eq:cascade}
\end{equation}
where candidate 1 injects before $K_1$ and candidate 2 injects after $K_1$. At a single frequency,
\begin{equation}
 g_1(j\omega)=K_1(j\omega)g_2(j\omega).
 \label{eq:cascadecollinear}
\end{equation}
Thus $[g_1]=[g_2]$. The unknown source amplitude and phase absorb the complex scalar $K_1(j\omega)$, so the two internal interventions generate the same set of steady downstream measurements. No alternative terminal feature can correct this ambiguity.  It can be removed only by introducing additional information, for example, by measuring the signal between the output of \(K_1\) and the downstream injection point, by exploiting onset-transient information when \(K_1(s)\) has distinguishable dynamics, or by otherwise restricting the admissible source waveform or internal realization.

\emph{2) Correctable branch---mechanism mismatch or feature-projection loss:}
Suppose instead that the raw response sets are disjoint but a selected feature collapses them:
\begin{equation}
 \mathcal Y_k^{\omega}\cap\mathcal Y_\ell^{\omega}=\varnothing,
 \qquad
 \mathcal Z_{k,0}\cap\mathcal Z_{\ell,0}\neq\varnothing,
 \label{eq:correctablebranch}
\end{equation}
where $\mathcal Z_{k,0}=\Phi_0(\mathcal Y_k^{\omega})$. The failure is correctable in principle if a physically appropriate, information-preserving feature $\Phi_1$ satisfies
\begin{equation}
 \Phi_1(\mathcal Y_k^{\omega})
 \cap
 \Phi_1(\mathcal Y_\ell^{\omega})
 =\varnothing.
 \label{eq:correctedfeature}
\end{equation}
This branch includes both an inappropriate physical pairing and an overly aggressive projection of otherwise distinguishable data.

Motivated by the relevance of volt--var relations to reactive-power-driven oscillation mechanisms \cite{Wang2026PassivityIBR}, this paper defines an experimental normalized $V$--$Q$ in-phase measure:
\begin{equation}
 \chi_{QV,i}
 =\frac{\operatorname{Re}\{\widehat V_i^*\widehat Q_i\}}
 {|\widehat V_i|\,|\widehat Q_i|}, \qquad
 |\widehat V_i|\,|\widehat Q_i|\neq0,
 \label{eq:qvfeature}
\end{equation}
which directly measures the in-phase relation between voltage and reactive-power oscillations. For volt--var mechanisms, it may expose relevant information more directly than the voltage--current pairing underlying conventional DEF. However, this $V$--$Q$ measure cannot overcome a raw-data identifiability limit; it can only correct a mechanism-dependent feature mismatch by retaining volt--var information that DEF or CDEF may not represent adequately. CDEF provides a related example in which a richer or rotated feature representation can recover distinctions obscured by a particular projection, as illustrated in the case studies.

However, such a $Q$--$V$ measure is not necessarily effective for mechanical-power, angle, current-control, or other mechanisms. It may also be poorly conditioned when $V$ or $Q$ has weak oscillatory visibility and cannot resolve the internal ambiguity in (\ref{eq:cascadecollinear}). It should therefore be viewed as a mechanism-consistent correction within the raw-data identifiability limit, not as a substitute for the identifiability test.
More generally, mechanism-based features can improve localization when they preserve and correctly interpret the available information, but they cannot overcome raw-data nonidentifiability. The proposed framework thus provides an information-level benchmark for mechanism-informed feature methods rather than replacing their physical source criteria.

\vspace{-0.15in}
\subsection{Poor conditioning and model error}
Unlike structural nonidentifiability arising from exact collinearity, poor conditioning is different: the candidate rays are mathematically distinct,
\begin{equation}
 g_\ell\neq c g_k\quad\text{for every }c\in\C,
 \qquad
 0<\delta_{k\ell}^{W}\ll1,
 \label{eq:nearcollinear}
\end{equation}
so exact noise-free localization remains possible, but the distance $d_{k\rightarrow\ell}^{W}$ from the true harmonic response to a competing ray is small. 
Unique nearest-ray recovery is guaranteed by (\ref{eq:robustbound}). When $d_{k\rightarrow\ell}^{W}$ is comparable to measurement error, phasor-extraction error, or the output displacement caused by model mismatch, the nominally distinct candidates can no longer be discriminated reliably. This is practical ambiguity rather than structural equivalence. This is illustrated by a case study on the Kundur system in Section~\ref{sec:kundur-validation}.

Model error can be written explicitly as
\begin{equation}
 \widehat y=\alpha(g_k+\Delta g_k)+n,
 \qquad
 \widetilde g_j=g_j+\Delta g_j,
 \label{eq:modelerror}
\end{equation}
where $\Delta g_k$ changes the realized response and $\Delta g_j$ changes the dictionary rays used by the locator. Even with a moderate nominal $\delta_{k\ell}^{W}$, sufficiently large or directionally adverse perturbations can reduce the realized separation or reverse a residual ordering. Conversely, improved calibration, sensor placement, or uncertainty-aware set-valued decisions can restore a reliable margin without changing the structural identifiability class.

As feature-based methods, DEF and CDEF compress the terminal measurements differently. Because CDEF retains a richer representation than DEF, it may preserve distinctions between candidate sources that are lost under the DEF projection. Let $s_k^{\Phi}$ be the score assigned to candidate $k$ by a feature-based method and let $\rho_k^{\Phi}$ bound its perturbation under the selected window, model, and measurement uncertainty. A sufficient ranking margin against candidate $\ell$ is
\begin{equation}
 s_k^{\Phi}-s_\ell^{\Phi}
 >\rho_k^{\Phi}+\rho_\ell^{\Phi}.
 \label{eq:featuremargin}
\end{equation}
If this margin is small, modest changes in filtering, terminal data, resistance compensation, or feature projection can change the ranking even though the raw candidates are not exactly equivalent. A richer DEF variant may improve the feature margin for a particular mechanism, but no feature can provide a robust singleton when the available raw-measurement-space separation is below the aggregate uncertainty floor.

\begin{table*}[!t]
\caption{Four-Step Diagnostic Framework and Associated Source-Location Failure Classes.}
\label{tab:failuretaxonomy}
\centering
\setlength{\tabcolsep}{3pt}
\renewcommand{\arraystretch}{1.4}
\scriptsize
\begin{tabularx}{\textwidth}{@{}p{0.14\textwidth}p{0.27\textwidth}p{0.28\textwidth}X@{}}
\toprule
Failure class & Mathematical diagnosis & Illustrations & Resolution \\
\midrule

Mechanism mismatch
& Raw data may be distinguishable, but the chosen variable pairing, supply rate, sign convention, or interpretation is inconsistent with the active mechanism; no specific $\delta_{k\ell}^{W}$ defines this class
& Conventional DEF may be uninformative for a voltage--var mechanism, while a $Q$--$V$ or other mechanism-consistent feature uses the relevant relation
& Retain the raw measurements, use a mechanism-consistent feature, and verify preservation of the raw distinction \\

Feature-projection loss
& $\mathcal Y_k\cap\mathcal Y_\ell=\varnothing$ but $\Phi(\mathcal Y_k)\cap\Phi(\mathcal Y_\ell)\neq\varnothing$; feature extraction destroys raw distinguishability
& Distinct terminal responses may yield the same scalar DEF value or ranking, while a richer CDEF representation may preserve the distinction
& Use the raw projective signature or a richer feature map and test distinguishability over the declared candidate set \\

Structural \newline nonidentifiability
& $g_\ell=cg_k$, $c\neq0$, or $g_k=0$; for a nonzero pair, $\delta_{k\ell}^{W}=0$
& DEF, CDEF, $Q$--$V$, harmonic power, and any other deterministic feature receive observationally equivalent measurements
& Add measurements, transient information, waveform constraints, or intervention knowledge; otherwise return the equivalence class \\

Poor conditioning \newline and/or model error
& Candidate rays are distinct but close, $0<\delta_{k\ell}^{W}\ll1$, and/or perturbations are comparable to $d_{k\rightarrow\ell}^{W}$
& DEF or CDEF rankings may change under small changes in window, noise, model parameters, or projection
& Improve sensing or calibration, quantify uncertainty, and use confidence-aware or set-valued decisions when the robust margin is insufficient \\

\bottomrule
\end{tabularx}
\end{table*}

\vspace{-0.15in}
\subsection{Four-Step Diagnostic Procedure and Failure Classes}
Table~\ref{tab:failuretaxonomy} summarizes the failure class associated with each stage of the diagnostic framework. The classes are defined by where the diagnostic sequence fails, rather than by which source-location method is used. DEF, CDEF, and their extensions serve as representative feature-based methods for illustrating the last two steps. The complete diagnostic sequence therefore consists of four steps:

\begin{enumerate}

\item \textit{Raw identifiability}: determine whether candidate rays $[g_k]$ and $[g_\ell]$ are equivalent, i.e., $g_\ell=cg_k$ for some $c\neq0$, or distinct. With only one complex harmonic coefficient, any two nonzero scalar signatures are projectively indistinguishable when forcing amplitude and phase are unknown.

\item \textit{Robust separation}: if $[g_k]\neq[g_\ell]$, assess whether $d_{k\rightarrow\ell}^{W}$ is large enough relative to uncertainty. Nearly collinear signatures, $0<\delta_{k\ell}^{W}\ll1$, are identifiable in principle but may exchange residual rankings under modest measurement or model error.

\item \textit{Feature preservation}: determine whether the selected feature preserves raw-data distinctions. With $\mathcal Z_k=\Phi(\mathcal Y_k)$, feature-projection loss occurs when $\mathcal Y_k\cap\mathcal Y_\ell=\varnothing$ but $\mathcal Z_k\cap\mathcal Z_\ell\neq\varnothing$. Distinct terminal responses may, for example, yield the same scalar DEF score, while a richer CDEF representation may preserve their distinction.

\item \textit{Mechanism-consistent interpretation}: determine whether the retained feature uses variables and an interpretation appropriate to the active mechanism. A voltage--var mechanism, for example, may remain distinguishable in the retained feature yet be misclassified by a criterion that ignores the $Q$--$V$ relation.
    
\end{enumerate}

To summarize, the objective of this paper is to establish when source localization is fundamentally possible from the available harmonic measurements and to distinguish this measurement-level limit from failures introduced by poor conditioning, feature extraction, or mechanism-dependent interpretation. For one constant-amplitude sinusoidal intervention observed through steady-state harmonic measurements, exact source localization is possible if and only if all admissible candidate signatures are nonzero and pairwise projectively distinct. The paper establishes the exact identifiability condition, shows how this limit can be attained under the stated assumptions, quantifies robustness to bounded perturbations, and provides the above taxonomy for diagnosing why a practical localization procedure may nevertheless fail.

\vspace{-0.1in}
\section{Numerical Studies}
\label{sec:numerical-studies}

\subsection{Common Numerical Protocol}
The numerical studies use Powertech's TSAT on the Kundur two-area system and IEEE--NASPI Contest models to examine whether: 1) the measurements distinguish candidate sources; 2) the distinction remains robust to measurement and model uncertainty; and 3) terminal features preserve distinctions present in the raw measurements. All cases use one constant-frequency sinusoidal source, and each comparison uses a common base model, operating point, forcing frequency, and measurement set.

For each candidate, a reference simulation provides the steady-state harmonic signature used to construct the dictionary, while separate simulations provide test events. Multiple settled windows from a test trajectory may provide repeated harmonic estimates; independent trajectories and repeated windows are reported separately. Harmonic coefficients are estimated by the same least-squares procedure within each comparison.

For the model-based benchmark, each measured harmonic vector is matched to the candidate dictionary using the normalized residual $J_k/\|\widehat y\|_W^2$, with the minimum-residual candidate selected. Here $W$ provides deterministic engineering normalization, or may be chosen as a precision matrix when measurement covariance is available. Under uncertainty, (\ref{eq:point-set-estimate}) may retain multiple candidates; this denotes practical ambiguity, not an exact equivalence class unless the signatures are analytically identical up to complex scale. Empirical harmonic-estimation errors do not replace the deterministic condition in (\ref{eq:robustbound}). Parameter perturbations may also alter the candidate dictionary and are therefore used only to illustrate robustness beyond Proposition~3.

For terminal-feature comparisons, the DEF ($K=0$), CDEF ($K=0.135$), and $Q$--$V$ implementations are fixed within each comparison. The variable pairing, branch orientation, projection parameters, and bus aggregation are held fixed. These are representative rather than exhaustive implementations. A terminal-bus method can identify only buses in its score map; sources outside that set are reported as outside its native decision coverage rather than as feature-level misclassifications.

\vspace{-0.1in}
\subsection{Validation on a Two-Area System}
\label{sec:kundur-validation}

\noindent \emph{1) System setup and well-separated recovery}

The 11-bus, four-generator Kundur two-area model provided with TSAT is studied using two four-candidate dictionaries: $\mathcal K_P$, consisting of TGOV1 governor-reference interventions at G1--G4, and $\mathcal K_V$, consisting of ESAC4A voltage-reference interventions at G1--G4. Each experiment fixes the base model and intervention mechanism and identifies the active generator. The weighting matrix $W$ in \eqref{eq:alphahat} uses engineering scales of 0.01~pu for voltage/current, $\pi/180$~rad for angles, 10~MW/Mvar for power, and 0.01~Hz for speed. Measurements include bus voltages; generator voltage, power, speed, and current; and branch power and current. Cases representing the regimes in Table~II are selected from pairwise candidate-signature separations. Sinusoidal reference perturbations at 0.4 or 0.7~Hz are applied from $t=4$--5~s through a 1.5-s raised-cosine ramp and sustained thereafter. Simulations last 30--40~s, with harmonic analysis over settled integer-cycle windows beginning at $t=12$--16~s. The fitting frequency is estimated from the response and then fixed for source localization.

A well-separated 0.4-Hz governor-reference case serves as a positive control before the four failure classes in Table~\ref{tab:failuretaxonomy}. The $\mathcal K_P$ dictionary uses eight channels: branch-current magnitudes on lines 2--6, 3--11, and 5--6; the imaginary component of G3 current; active powers of G3 and G4; and branch active powers on lines 4--10 and 3--11. All signatures are nonzero, with minimum pairwise projective separation 0.952. All eight independent test events identify the correct source uniquely, with a minimum best-to-second-best normalized squared-residual margin of 0.710. This provides a well-separated reference for the subsequent ambiguity and robustness studies.

\noindent \emph{2) Terminal information and its interpretation}

The following examples show that terminal processing can reduce source separation or alter the final decision even when the raw measurements remain distinguishable.

The first experiment examines mechanism dependence. A 0.4-Hz voltage-reference oscillation is applied at G1 and compared against the full $\mathcal K_V$ dictionary. The raw harmonic signatures are well separated, with a minimum pairwise projective separation of 0.912, and the raw projective comparison correctly identifies G1. Using the terminal features, however, the implemented DEF identifies G1 in only two of four settled windows from the same event, whereas the in-phase $Q$--$V$ score identifies G1 in all four. This example shows that a terminal feature and its decision rule may fail to preserve the source distinction already present in the raw measurements, while a feature more consistent with the voltage--var mechanism can improve the decision. The result does not, however, establish a unique cause for each incorrect DEF window.

An exploratory re-analysis illustrates possible decision-stage failure in a fixed automatic DEF implementation. Two independent 0.4-Hz governor-reference trajectories are used for each of G2 and G4 under common timing settings. Although the G2 and G4 cases have well-separated raw signatures ($\delta^W=0.977$) and distinct four-bus DEF vectors, the largest-signed-slope rule assigns all four trajectories to G2 (Fig.~\ref{fig:kundur-decision}). For the G4 cases, all slopes are negative, so the rule selects the least-negative terminal. Sign- or orientation-admission logic could be intentionally designed to prevent this failure; this example illustrates decision-stage information loss.

\begin{figure}[t]
\vspace{-0.1in}
\centering
\includegraphics[width=0.9\columnwidth]{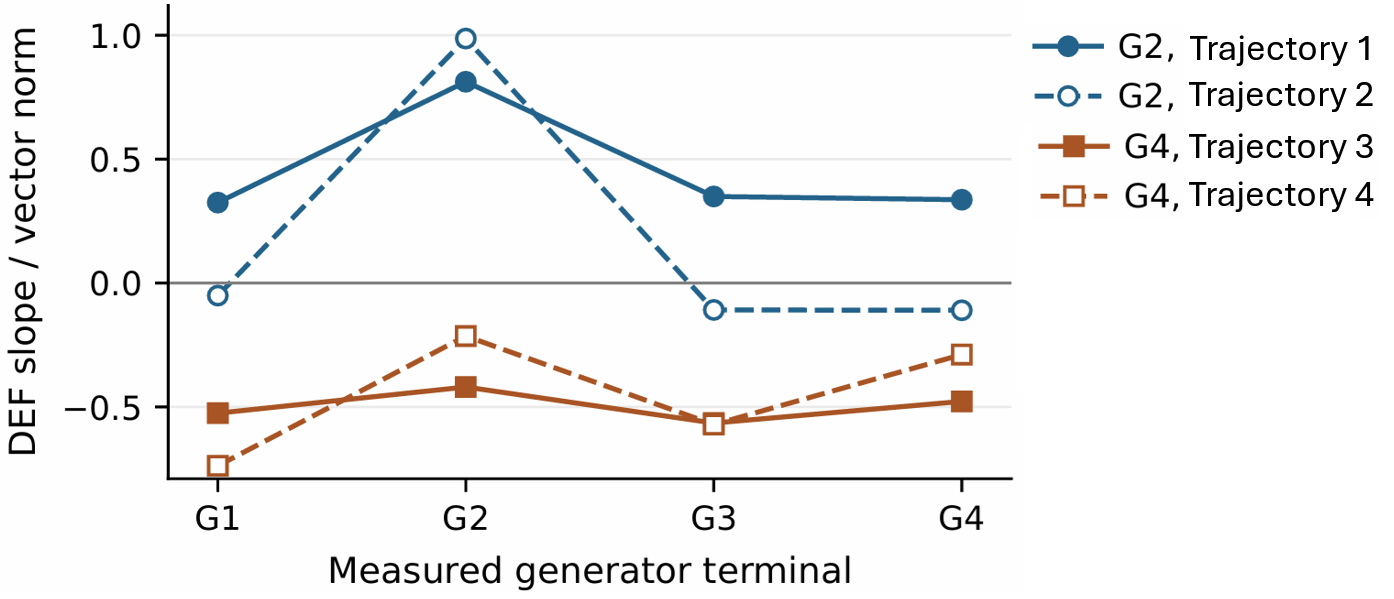}
\caption{Exploratory reanalysis of four 0.4-Hz governor-reference trajectories, two each from G2 and G4. Although the full terminal DEF vectors differ, the fixed largest-signed-slope rule selects G2 in all four cases.}
\vspace{-0.1in}
\label{fig:kundur-decision}
\end{figure}

A second experiment, using independently simulated one G2 event and one G3 event, examines the effect of aggregating the terminal DEF features into two area-level scores. The raw G2/G3 signatures are well separated, with projective separation 0.975. Two held-out 0.4-Hz governor-reference test trajectories are then evaluated, one with the source at G2 and one at G3. After area aggregation, the DEF feature separation decreases to 0.562, but pairwise matching in the reduced feature space still identifies both test trajectories correctly. Thus, the aggregation reduces source separation without causing a localization failure in this case. Because raw-data-space and feature-space distances have different scaling and uncertainty levels, their numerical distance values should not be compared against a common robustness threshold.

\noindent \emph{3) Exact ambiguity and poor conditioning}

For G3 and G1 governor-reference forcing at 0.4~Hz, retaining only the complex fundamental coefficient of active power on line 7--8 yields a single nonzero complex-scalar measurement. Therefore, the two candidate signatures lie on the same projective ray in $\mathbb C$ and are exactly indistinguishable under unknown forcing amplitude and phase. 
Adding the line-current angle on line 10--11 increases the G1/G3 separation to 0.988, illustrating how an additional measurement can remove the ambiguity.

Poor conditioning differs from exact ambiguity because the candidate rays remain distinct. At 0.7~Hz, active power and current magnitude on line 7--8 give a G3/G4 separation of only $3.08\times10^{-5}$ in $\mathcal K_P$. A separate G3 test event is correctly classified by the nominal dictionary, but with a normalized squared-residual margin of only $4.47\times10^{-10}$. Under six $\pm1\%$ perturbations of load, line reactance, or inertia, two cases are misclassified. With a residual-set tolerance of $10^{-5}$, the residual differences are too small to exclude any candidate, so the nominal event and all six perturbed cases retain G1--G4 in the admissible candidate set. This behavior reflects extreme sensitivity to a small positive separation, not exact nonidentifiability.

Collectively, these case studies on the two-area system distinguish well-separated recovery, terminal-feature and decision effects, structural nonidentifiability, and poor conditioning within the framework of Table~\ref{tab:failuretaxonomy}.

\vspace{-0.15in}
\subsection{Studies on IEEE--NASPI Contest Models}
\label{sec:naspi-validation}
\noindent \emph{1) System configuration and case designs}

The 2021 IEEE--NASPI OSL Contest provides synthetic PMU data for 13 forced-oscillation events generated by time-domain simulation of a 243-bus WECC system with 109 synchronous and 37 renewable generators \cite{MaslennikovWang2022Contest,Wang2025PMUOSL}. Cases~1, 3, 4, and 7 serve as four seed models from which TSAT generates five controlled test datasets. Within each comparison, the base model, operating point, forcing frequency, and measurement configuration are fixed.
The contest terminal layout, denoted $M_{\rm T}$, contains synchronized bus-voltage and branch-current magnitudes and angles, totaling 292 harmonic channels. The extended layout $M_{\rm T+G}$ adds generator voltage, active and reactive power, rotor angle, speed, and field voltage, for 428 channels. Measurements are normalized by 0.01~pu for per-unit quantities, $1^\circ$ for angles, and 10~MW/Mvar for power. Harmonic coefficients are estimated over three settled 6-cycle windows at the known source frequency; independent frequency estimation is used only as a consistency check.

Four contest-derived studies examine stages of the source-location procedure in \eqref{eq:informationhierarchy}, rather than providing one case per failure class in Table~II. They test whether distinctions established on the two-area system persist in a much larger model with broader candidate dictionaries and realistic terminal measurements. Case~1 demonstrates end-to-end recovery, Case~4 candidate-set/source-coverage limitations, Case~7 mechanism mismatch, and Case~3 measurement-induced structural nonidentifiability and its removal by sensor augmentation.

\noindent \emph{2) Case 1---end-to-end recovery:}

The dataset on Case~1 evaluates bus-level localization among a broad 0.82-Hz governor-reference forcing dictionary. The original Case~1 source is with a generating unit at bus 1431. Of the 109 synchronous generators, 103 in-service units (spanning 48 buses) have TGOV1, GAST, or HYGOV governor models with compatible reference channels and are therefore included as candidate interventions for testing. Multiple admissible intervention channels may exist at the same bus. Thus, for source localization at the bus level, the residual is defined as the minimum channel residual among all candidates located at that bus. The minimum cross-bus projective separation is 0.0331 for terminal set $M_{\rm T}$ and 0.0332 for $M_{\rm T+G}$. These values establish the distinguishability at the bus level, not necessarily within the unit. Using Case~1 as the seed scenario, 21 controlled test events are independently simulated, including 12 at bus 1431 and nine at selected competing or same-area buses. All 21 events are localized to the correct bus over three settled windows per event, demonstrating successful end-to-end source localization through the procedure in \eqref{eq:informationhierarchy}.

\noindent \emph{3) Case 4---candidate-set/source-label coverage:}

The original Case~4 contains a 0.379-Hz governor source at bus 3831. In this study, the Case~4 model is retained as the base system, while the original forcing is removed and 0.3-Hz governor-reference interventions are introduced separately via two other California-area units at buses 2438 and 3933. Only four regional voltage/current terminal pairs on branches 3234--3204, 3333--3303, 3933--3923, and 3903--3301 are used, providing 16 scalar measurement channels. The two raw source signatures are well separated, with a projective separation of 0.905, compared with a maximum within-source variation of only 0.0162. Using the 103-channel governor-reference dictionary, both source buses are correctly identified in all three windows of each of 12 independently simulated trajectories per source. 

The corresponding DEF responses also remain distinct. However, if the DEF implementation restricts candidate source locations to the eight terminal buses of the four directly measured branches, bus 2438 is outside the candidate set and therefore cannot be identified by the resulting DEF localization rule. Thus, unlike Case~7, where the source information is lost or misinterpreted through the feature and decision stages, Case~4 preserves source distinction through the DEF feature stage, $\Phi$,  but cannot achieve exact bus localization at the final decision stage $\mathcal D$ in \eqref{eq:informationhierarchy} because the true source bus is excluded from its candidate set.

\noindent \emph{4) Case 7---mechanism dependence:}

The Case~7 study examines mechanism dependence for a 0.379-Hz source at bus 2634. Two sets of 12 independent trajectories use the same source unit with governor- and exciter-reference forcing, respectively, while varying amplitude, phase, and onset time. The resulting raw dictionary contains eight candidate intervention channels at three buses. As Table~\ref{tab:naspi-mechanism} shows, the raw harmonic-signature method identifies the source correctly in all 36 settled windows (three windows per trajectory) for both mechanisms, confirming that $y_k$ in \eqref{eq:informationhierarchy} retains sufficient source-discriminating information. By contrast, the subsequent feature and decision stages, $\Phi$ and $\mathcal D$, may lose or misinterpret that information, yielding as few as 0/36 correct decisions for a mismatched feature--mechanism pairing. The governor--exciter contrast therefore demonstrates mechanism dependence of the tested features rather than a ranking among methods.

This distinction also explains why identical bus-level decisions need not imply equivalent feature maps. Under exciter forcing, DEF and CDEF both select bus 2604, yet the true bus 2634 improves from a median rank of 66.5 under DEF to 6 under CDEF, while its median score changes from $-7.24\times10^{-4}$ to $7.52\times10^{-3}$. The $K\operatorname{Re}\{\mathcal W_e\}$ term reverses the branch-score sign, but larger aggregated scores at bus 2604 still determine the same top-1 decision. Thus, CDEF materially changes the retained representation even when the final decision is unchanged.

\begin{table}[t]
\caption{Correct bus decisions at 2634/C in dataset Case 7}
\label{tab:naspi-mechanism}
\centering
\begin{tabular}{@{}lcc@{}}
\toprule
Decision rule & \shortstack{Governor reference} & \shortstack{Exciter reference} \\
\midrule
DEF with $K=0$ in in \eqref{eq:def-cdef-feature} & 36/36 & 0/36 \\
CDEF with $K=0.135$ in \eqref{eq:def-cdef-feature} & 36/36 & 0/36 \\
In-phase $Q$--$V$ measure in \eqref{eq:qvfeature}& 0/36 & 36/36 \\
\bottomrule
\end{tabular}
\par\smallskip
\parbox{\columnwidth}{\footnotesize Counts are over three windows each from 12 simulated trajectories.}
\end{table}

\noindent \emph{5) Case 3---measurement-induced ambiguity:}

The dataset on Case~3 uses matched SEXS voltage-reference interventions at buses 1131, the original source location in the contest, and 1032 at 0.379~Hz. Retaining only the fundamental coefficient of current magnitude on branch 1002--6506 gives one complex scalar $\alpha e^{j\omega t}$. Both signatures of 1131 and 1032 are nonzero; hence, their exact projective equivalence follows from Corollary~\ref{scalar-phasor}. Because this scalar record does not provide a complete voltage/current terminal pair, DEF, CDEF, and $Q$--$V$ measures are inapplicable. Eight separately simulated, amplitude/phase-aligned trajectory pairs give 24 paired-window comparisons, all within their empirical extraction tolerances. The raw locator returns $\widehat{\mathcal K}=\{1032,1131\}$ in all 48 individual-window decisions. These numerical results are consistent with the analytically established ambiguity, which arises at the measurement stage $y_k$ of \eqref{eq:informationhierarchy}. This example is intentionally constructed as a measurement-limited no-go case; it does not imply that the two buses remain indistinguishable with more PMU measurements.

Extending the measurements to 58 voltage/current measurement channels on the boundary branches of the South region that contains buses 1032 and 1131 separates the two candidate signatures and gives the correct source in all 48 decisions. The full terminal layout $M_{\rm T}$ further increases the separation, while adding generator measurements in $M_{\rm T+G}$ provides little additional improvement (Fig.~\ref{fig:naspi-structural}). Thus, unlike Cases~7 and 4, the limitation in Case~3 initially occurs at the measurement stage $y_k$ itself and is removed by adding measurement information.

\begin{figure}[t]
\centering
\includegraphics[width=0.55\columnwidth]{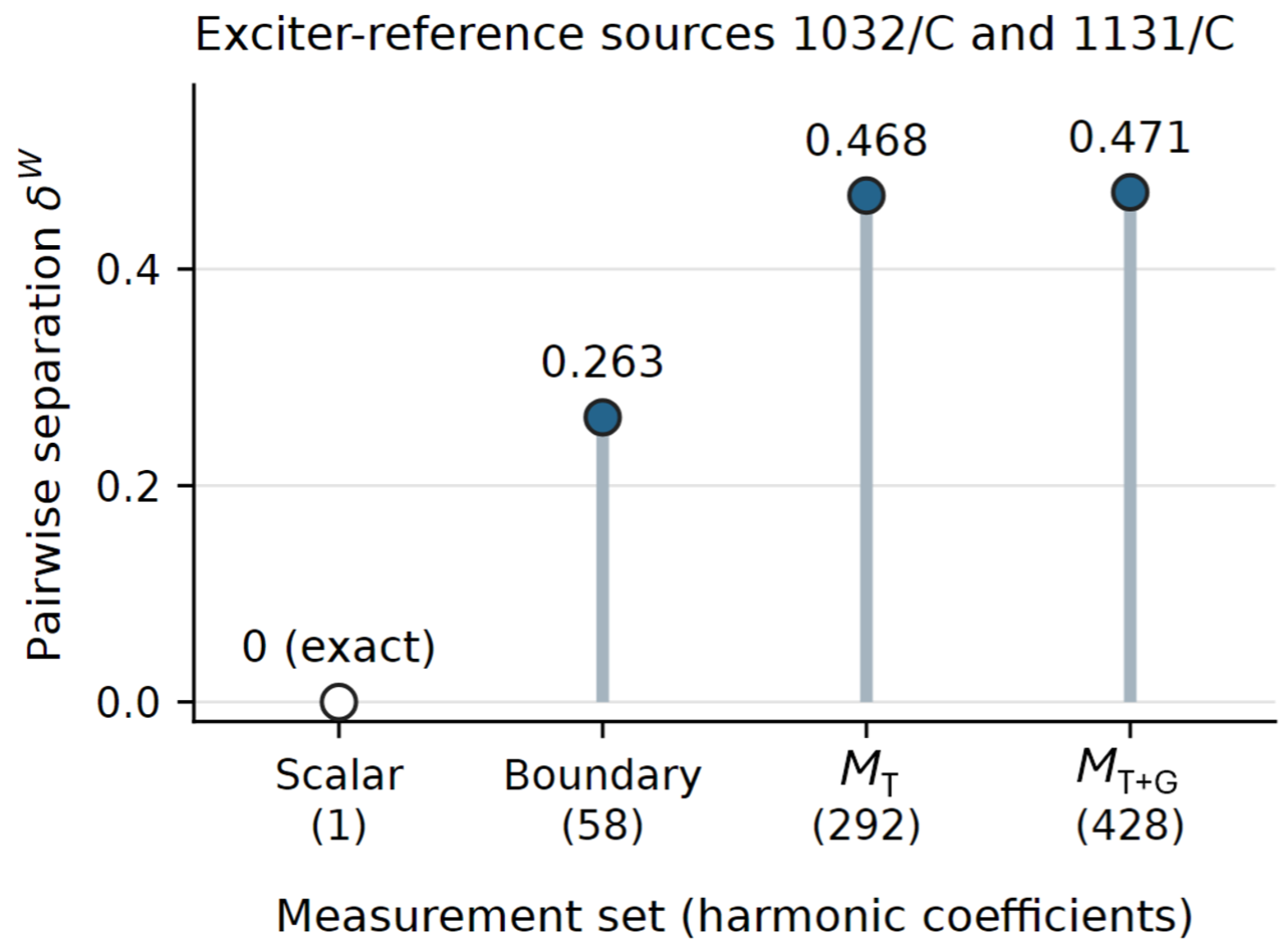}
\caption{Case~3: sensor augmentation for voltage-reference sources at buses 1032 and 1131. One scalar current-magnitude harmonic coefficient yields the two-bus ambiguity, which is recovered by three larger measurement layouts.}
\label{fig:naspi-structural}
\end{figure}

\vspace{-0.1in}
\section{Discussion and Conclusion}
\label{sec:discussion}
This paper addresses a conditional identifiability question: given a declared intervention dictionary, measurement set, operating-point model, forcing frequency, and single-source assumption, can the active intervention be uniquely identified from its steady-state harmonic response? Unknown forcing amplitude and phase reduce each nonzero candidate signature to its complex projective ray. Exact localization is possible if and only if these rays are distinct; collinear signatures are indistinguishable without additional information.

Structural identifiability does not ensure robust recovery, since closely spaced rays are sensitive to measurement and model errors. Proposition~3 gives a deterministic recovery condition for a fixed nominal dictionary, while parameter uncertainty may also perturb the signatures themselves. The framework separates raw identifiability, robustness, feature preservation, and mechanism-consistent interpretation.

The numerical studies demonstrate well-separated recovery, measurement-induced ambiguity and its removal, and sensitivity of nearly collinear signatures, while distinguishing terminal-feature performance, bus-level recovery, and internal-channel identifiability. They condition on the forcing frequency and are not end-to-end detection studies. The theory is limited to one constant-amplitude sinusoidal source and steady-state harmonic measurements; transient, time-varying, nonlinear, and multiple-source cases require separate formulations.

\vspace{-0.2in}
\section*{Acknowledgment}
OpenAI ChatGPT and Codex were used as research-assistance tools to assist with mathematical checking and automation of numerical studies. All results and conclusions were reviewed and verified by the author.

\vspace{-0.1in}
\bibliographystyle{IEEEtran}
\bibliography{references}

\end{document}